\definecolor{myurlcolor}{rgb}{0,0,0.7}
\theoremstyle{plain}
\newtheorem{theorem}{Theorem}
\def\bea{\begin{eqnarray}}
\def\eea{\end{eqnarray}}
\def\ba{\begin{array}}
\def\ea{\end{array}}
\def\ket{\rangle}
\def\bra{\langle}
\def\beq{\begin{equation}}
\def\eeq{\end{equation}}
\begin{document}

%\title{Resource theory of coherence and wave-particle duality in a linearly dependent basis}
%\title{Resource theory of coherence in a linearly dependent basis and the resulting wave-particle duality}

%\title{Resource theory of quantum coherence with  linearly dependent pointers\\and corresponding wave-particle duality}

%strongly non-distinguishable pointers

\title{Resource theory of quantum coherence with  probabilistically non-distinguishable pointers\\and corresponding wave-particle duality}

\author{Chirag Srivastava}
\affiliation{Harish-Chandra Research Institute, HBNI, Chhatnag Road, Jhunsi, Allahabad 211 019, India}

\author{Sreetama Das} 
\affiliation{Harish-Chandra Research Institute, HBNI, Chhatnag Road, Jhunsi, Allahabad 211 019, India}
\affiliation{Faculty of Physics, Arnold Sommerfeld Centre for Theoretical Physics (ASC), Ludwig Maximilian University, Theresienstr. 37, 80333 Munich, Germany}

\author{Ujjwal Sen}
\affiliation{Harish-Chandra Research Institute, HBNI, Chhatnag Road, Jhunsi, Allahabad 211 019, India}
\begin{abstract}

%\textcolor{red}{later!!}
 One of the fundamental features of quantum mechanics is the superposition principle, a manifestation of which is embodied in quantum coherence. Coherence of a quantum state is invariably defined with respect to a preferred set of pointer states, and there exist  quantum coherence measures with respect to 
 %. A number of previous works discuss coherence in 
deterministically as well as probabilistically distinguishable sets of quantum state vectors.
% orthogonal as well as non-orthogonal but linearly independent bases. 
Here we study the resource theory of quantum coherence with respect to an arbitrary set of quantum state vectors, that may not even be probabilistically distinguishable. Geometrically, a probabilistically indistinguishable set of quantum state vectors forms a linearly dependent set.
%characterization of such a set is its linear dependence. 
%states that span the  general spanning set, basis consisting linearly dependent states. 
We find the free states of the resource 
theory, and analyze the 
corresponding free operations, 
%and also 
obtaining a necessary condition for an arbitrary quantum operation to be free. %on free/incoherent operations when the basis consists of only three pure qubits. 
We identify 
%discuss about the potential 
a class of measures of the quantum coherence, and in particular establish a monotonicity property of the measures. %measures with respect to any general basis. 
We find a connection of an arbitrary set of quantum state vectors 
%forming a basis, 
with positive operator valued measurements with respect to the resource theory being considered, which %. This 
paves the way for an alternate definition of the free states.
%in a general basis, based on measurements. 
%We also argue that the resource theory, considered here, is different with the existing resource theories of coherence based on measurements. 
We notice that the resource theory of magic can be looked upon as a resource theory of quantum coherence with respect to a set of quantum state vectors that are probabilistically indistinguishable.
We subsequently examine the wave-particle duality in a double-slit set-up in which superposition of 
%linearly dependent states 
probabilistically indistinguishable quantum state vectors
is possible. Specifically, we report a complementary relation between quantum coherence and path distinguishability in such a set-up.

\end{abstract}

\maketitle
\section{Introduction}

%The superseding success of quantum mechanical protocols over the corresponding classical protocols, lies in the extraordinary fact that, the laws of quantum mechanics allow one to achieve certain tasks which are either classically impossible, or are poorly executed, in terms of efficiency. Examples of such novel quantum mechanical protocols are quantum teleportation \cite{ben'93}, quantum dense coding \cite{ben'92}, quantum key-distribution \cite{key} etc. 

A quantum state possesses certain properties which are absent in the states of a classical system.
These properties are the key elements acting as ``resource'' in various quantum information, computational, and communication protocols,
like quantum teleportation \cite{ben'93}, quantum dense coding \cite{ben'92}, and quantum key distribution \cite{key}. It is therefore important to experimentally preserve and theoretically study such resources for their efficient use in 
%Hence, quantifying and preserving resources is crucial for 
quantum technologies. In the last few decades, the study of quantum entanglement \cite{hor'09,ent-2,das'06} as a resource has seen a significant amount of development.
%and established on a concrete ground. 
%Another resource theory, t
Just like entanglement, the concept of quantum coherence was known from the early days of quantum theory. However, a systematic study of the resource theory of quantum coherence  has gained attention only in the last few years, and the field has since experienced considerable progress \cite{aberg'06,plenio'14,winter'16}. Coherence of a quantum state is the manifestation of the superposition principle of quantum mechanics, which says that a quantum system can simultaneously exist in more than one state. It allows one to understand phenomena like the interference pattern in an interferometer.
%,  which were not explicable earlier. 
In quantum computational tasks, it is essential to preserve coherence, as decoherence can lead to loss of quantum properties of the system, consequent to which the advantage of the machine over its classical counterpart may get significantly affected.

Quantum coherence is invariably defined with respect to a fixed preferred basis of the system. %In the pioneering paper by Baumgratz et al. \cite{plenio'14}, 
A resource theoretic description of quantum coherence, viz. the coherence-free states, free operations, and valid measures of coherence,
is typically formulated by using 
%was put forward, using 
an orthogonal basis as the preferred one. Following the initial works on quantification of quantum coherence as a resource \cite{aberg'06,plenio'14,winter'16}, a large number of works contributed to this field, which addressed topics like a systematic examination of the classes of free operations, the convertibility between an arbitrary state and the state containing maximal coherence, and the valid measures of coherence. See \cite{coherence_review} and references therein. It has been possible to relate quantum coherence with other quantum resources, e.g. non-Markovianity \cite{chanda'16,huang'17,cakmak'17,chiru'17,morte'17,pati'18}, non-locality \cite{mondal'17,bu'16,qiu'16}, entanglement \cite{jk'05,stre'15,xi'15,kill'16,stre'16,qi'17,chin'17,zhu'17,zhu'18},
and quantum discord \cite{yao'15,ma'16,guo'17}. Coherence have been proven to be beneficial for cooling in quantum refrigerators \cite{huber'15} and extracting work using quantum thermal machines \cite{anders'16,jenn'16}.

In an double-slit set-up, the states representing the beams coming out of different slits are mutually distinguishable, and are hence represented in the quantum description of the set-up by orthogonal pure states. Such distinguishable ``pointers'' appear 
%The situation is similar 
in other experimental adaptations of the same theoretical description, as in interferometric set-ups, leading to the usual choice of orthogonal bases for describing quantum coherence. Hence the coherence of the state after superposition, can be described in orthogonal basis. A Hilbert space that describes a quantum system can however be equivalently described by linearly independent non-orthogonal states. We remember that a set of mutually orthogonal states is of course linearly independent.  The resource theory of quantum coherence in superpositions over such linearly independent sets of states   
%however can be In a more general scenario, resource theory of coherence with respect to non-orthogonal but linearly independent basis 
has been explored recently in \cite{plenio'17,das'20}, with a double-slit set-up where such a superposition can naturally appear being provided in \cite{das'20}. 
There also exist a different formalism of coherence, based on measurements \cite{aberg'06,bruss'19,cimini'19,bruss2'19}. The motivation of these comes from the fact that an orthogonal set of states constituting a basis, used to define the coherence, can be seen as the outcomes of a rank-1 projective measurement. In  \cite{aberg'06}, a framework is given which can be seen as defining coherence based on higher-rank projective measurements, such that the corresponding relevant block-diagonal states form the set of free states. 
%Based on this idea, a 
A generalized theory of coherence with respect to generalized measurements, i.e., positive operator valued measurements (POVMs), has been proposed in \cite{bruss'19}. Further avenues of defining quantum coherence include Refs. \cite{aro-anekey}.

It may be noted here that while a set of quantum states can be distinguished deterministically if and only if they are mutually orthogonal, such a set can be probabilistically distinguished unambiguously if and only if they are linearly independent 
\cite{ghughu-eseychhilo-janalai}. A Hilbert space describing a quantum system can also be spanned by a set of linearly dependent states. We call such a set of states as a linearly dependent basis. It is clear that the set is overcomplete as a span of the corresponding Hilbert space. 
Physically interesting sets of states that form a linearly dependent basis include the overcomplete set of coherent states of a mode of an electromagnetic field \cite{malhar-jekhane-ananta}. Another scenario where such a situation appears naturally is 
in the resource theory of stabilizer-based quantum computation \cite{victor'14} (see also \cite{eternal-triangle,hopeless-hexagon}). 
It is therefore 
natural to ask: Can we formulate a resource theory of superposition over a linearly dependent basis? And can it be seen as 
a superposition over probabilistically indistinguishable pointers in a double-slit set-up?

%Given the existence of generalization of coherence from an orthogonal basis to a non-orthogonal but linearly independent basis, it is natural to ask- what if the basis consists of linearly dependent states? This question obviously raises another important question- does there exist any practical importance of coherence in linearly dependent basis? Indeed, its practical utility can be seen in the context of the resource theory of stabilizer computation \cite{victor'14}, which can essentially be seen as a resource theory of coherence with respect to some fixed linearly dependent basis. Here, the quantity  `magic' is a resource and is quite helpful for universal quantum computation.

We aim at obtaining two results in this paper. 
Firstly, 
%In this paper, 
similar to every resource theory, we first define the free states and the free operations for the resource theory of quantum coherence with respect to a general, possibly linearly dependent, basis. We also show that the resource theory of magic \cite{victor'14} can be seen as a resource theory of quantum coherence with respect to a linearly dependent basis.
%The next aim is to obtain the form or structure of incoherent operations. In this regard, it is important to mention that it can be a very difficult task to characterize the free operations if arbitrary number of linearly dependent states are considered as basis elements. Therefore, in this article, 
We subsequently provide a necessary condition to characterize free operations with respect to a qubit basis containing 
%arbitrary 
three 
arbitrary 
pure states. We then go on to define quantum coherence measures in the scenario of a general basis. 
%a linearly dependent basis scenario. 
Similar to coherence measures for orthogonal and linearly independent bases, any contractive distance measure between states suffices to provide a suitable measure of coherence in this case also.
% and like in previous resource theories of coherence with respect to orthogonal or linearly independent basis, any contractive distance measure between the states suffice criteria for a suitable candidate for coherence measure with respect to linearly dependent basis, too. 
%\textcolor{red}{(later!!)
We also develop a connection of an arbitrary set of, possibly linearly dependent, states with generalized measurements (POVMs), which helps in developing a relation between  incoherent states in the resource theory of quantum coherence with respect to a general basis and  generalized measurements. We also argue that the resource theory considered here is different from the POVM-based resource theory of quantum coherence developed in Ref. \cite{bruss'19}.
%}

Secondly, we show that it is possible to have multi-slit set-ups where a superposition over linearly dependent states 
can appear naturally.
%Next, 
We study a wave-particle duality relation for a quantum system 
%quanton (quantum particle) 
passing through such a 
double-slit apparatus.
%interferometer.
%ric set-up. 
The corresponding duality relations for superpositions 
over orthogonal and linearly independent bases have been studied earlier in Refs.
%, where a superposition of linearly dependent states is possible. Note that, such duality has been rigorously studied earlier where the basis of superposition consisted of orthogonal or linearly independent pure states 
\cite{das'20,zurek'79,green'88,englert'96,englert'08,bera'15}. In a double-slit set-up, the wave nature of the quantum system is quantified by the quantum coherence of the corresponding state, measured in the basis that is preferred by the apparatus considered. 
%formed by states representing the paths coming out of different slits. 
Whereas, the particle nature can be tracked down by employing a detector in each of these paths, and is quantified using the probability of successfully discriminating the possibly non-orthogonal - but linearly independent - detector states by unambiguous quantum state discrimination (UQSD) \cite{uqsd}.
We introduce 
%an interferometric 
double-slit
set-up where it is possible to superpose three fixed linearly dependent qubit states. We show, through numerical studies, that similar to the cases of superpositions over orthogonal and non-orthogonal but linearly independent bases, we again have a %also exists a 
complementarity relation between quantum coherence and path distinguishability of the quantum system.

%\textcolor{red}{later!! ---- 
The rest of the paper is organized as follows. In Sec. \ref{sec2}, the free states and free operations with respect to a general, possibly linearly dependent, basis are defined. The connection to magic is formally mentioned here. In Sec. \ref{sec3}, a necessary condition on the structure of free operations for the basis formed by any three pure states of a qubit is provided. In Sec. \ref{sec4}, we propose measures of quantum coherence in the case of a general basis.
%linearly dependent basis. 
In Sec. \ref{sec5}, a connection is established between the states of a general basis 
%linearly dependent states 
with the outcomes of a POVM, from the perspective of the resource theory under construction here.
%, and also possibility of superposing any set of states in an interferometric set-up is shown. 
In Sec. \ref{sec6}, we study the wave-particle duality in a set-up where superposition of three linearly dependent qubit states is possible. Finally, we conclude in Sec.\ref{sec7}.
%}

%\section{Definitions}
 \section{Free states and free operations: Link to magic}
 \label{sec2}
One of the fundamental features of quantum mechanics is the superposition principle, i.e., a given pure quantum state can be expressed as a linear superposition of other pure quantum states. For example, the state of the photon after passing through the slits in an Young's double-slit experimental set-up is a superposition of states of the photon passing through either of the slits. These states, representing different paths, form a natural basis for this set-up. Generally considered to be mutually orthogonal, these states can also be mutually non-orthogonal and linearly independent in realistic leaky set-ups \cite{das'20}. Quantum 
%As mentioned before,  
 coherence with respect to mutually orthogonal bases as also the more general case of 
 %and 
 linearly independent bases, have already been studied in detail \cite{aberg'06,plenio'14,winter'16,plenio'17,das'20}.
In the most general scenario, a ``basis'' may have elements which form a linearly dependent set and span the space. 
  Such a basis of a Hilbert space of dimension $d$ can be represented as \(\mathcal{B}=\{|\psi_1\ket,|\psi_2\ket,\ldots,|\psi_n\ket\}\), where  $|\psi_i\ket \in\mathbb{C}^d$ for $i=1,2,\ldots,n$ and $n\geqslant d$. Clearly, elements of the set $\mathcal{B}$ form a linearly independent set only if $n$ equals $d$.
  Otherwise, they form a ``linearly dependent basis''. We always require \(\mathcal{B}\) to form a spanning set of \(\mathbb{C}^d\). 
  
  A note about terminology is in order here. While a ``basis'' of a (linear) space is typically defined in a mathematics course (see e.g. \cite{AmtritavaGupta-Simmons}) as a linearly independent set that spans the space, there is some tradition of tweaking this definition. A classic example is the ``overcomplete basis'' of coherent states of a mode of an electromagnetic field
  \cite{malhar-jekhane-ananta}. A more recent example is the concept of ``unextendible product bases'' 
  in quantum information theory \cite{Catch-em-Young-aar-tar-chhabi}.

\noindent\textbf{{Free states:}} 
A quantum state $\rho$ is incoherent or free with respect to basis $\mathcal{B}$ if it can be expressed as a classical (i.e., probabilistic) mixture of the projectors of the basis elements, i.e.,   
\begin{equation}
\rho=\sum_{i=1}^n p_i |\psi_i\ket\bra\psi_i|,
\label{Eq.1}
\end{equation}
where $\{p_i\}$ forms a probability distribution. Let the set of free states with respect to basis $\mathcal{B}$ be denoted as $\mathcal{F}_{\mathcal{B}}$. 
 The states that cannot be expressed as in 
 %not satisfying 
 Eq. (\ref{Eq.1}) are said to be resourceful or coherent with respect to the basis $\mathcal{B}$.

\noindent\textbf{{Magic is quantum coherence with respect to a linearly dependent basis:}} 
Notice that in this formulation, the resource theory of stabilizer quantum computation \cite{victor'14} can be seen as 
%falls under the  category of 
a resource theory of quantum coherence for a linearly dependent basis.  For the case of single qubit stabilizer quantum computation, the free states or the stabilizer states are the states inside the octahedron, with the eigenstates of the three Pauli operators as its vertices on the Bloch sphere. And any state outside this octahedron is considered as resourceful and termed as ``magic'' states. This can be seen as a resource theory of quantum coherence with the linearly dependent basis being the three pairs of eigenstates of the Pauli matrices.
 
 %The resource theory of stabilizer quantum computation is a  
 %Note that the term resourceful associated with some state is not justified until and unless they provide some advantage over those which are not. Indeed, the advantage of resourceful states in case of resource theory of coherence defined with respect to general basis is unknown, but the motivation behind this is that the resource theories of coherence and magic studied previously are special cases of resource theory that is considered here. And the tasks of resourceful states in those cases are well known. 
 
\noindent\textbf{{Free operations:}}  Any valid physical operation, mapping each state from the set of free states to another free state is defined as a free operation, or incoherent operation. In quantum mechanics, any physical operation can be implemented by a completely positive and trace preserving (CPTP) map. Therefore, a CPTP map, \(\Phi:\mathbb{B}_+(\mathbb{C}^d) \rightarrow \mathbb{B}_+(\mathbb{C}^d)\), is said to be free or incoherent if $\Phi(\rho_f)\in\mathcal{F}_{\mathcal{B}}, \forall \rho_f \in \mathcal{F}_{\mathcal{B}}$, where 
\(\mathbb{B}_+(\mathbb{C}^d)\) is the space of all hermitian operators on \(\mathbb{C}^d\) that have a nonnegative spectrum. 
Any CPTP map $\Phi$, can be represented by a set of Kraus operators, ${K_i}$, that is 
\begin{equation}
\label{Eq2}
\Phi(\rho)=\sum_i  K_i \rho K_i^\dagger,
\end{equation}
with $\sum_i K_i^\dagger K_i = \mathbb{I}_d$, where \(\mathbb{I}_d\) denotes the identity operator on \(\mathbb{C}^d\), and \(\rho \in \mathbb{B}_+(\mathbb{C}^d)\) and has unit trace. Note that the map $\Phi$ is not guaranteed to return a free state if the post-selection of a particular Kraus operator or a particular set therefrom is permitted. The set of all CPTP operations satisfying Eq. (\ref{Eq2}) form a maximal set of incoherent operations and are denoted as $\mathcal{MIO}$.

 Consider now the incoherent operations which remain incoherent under any post-selection. In this case, 
 %it is enough to 
 we 
 demand that
 \begin{equation}
 \label{Eq3}
\frac{K_i \rho_f K_i^\dagger}{\text{tr}\{K_i \rho_f K_i^\dagger\}} \in \mathcal{F_B},
 \end{equation}
for all $i$ and for all \(\rho_f \in \mathcal{F_B}\).  The set of operations satisfying Eq. (\ref{Eq3}) are denoted as $\mathcal{IO}$.  
 Clearly, $\mathcal{IO}$ forms a subset of $\mathcal{MIO}.$ 

\section{Structure of Incoherent operations}
\label{sec3}
An important task in any resource theory is to characterize the structure of free operations. 
The most well-known example is probably the set of 
%We know that, the 
local quantum operations and classical communication (LOCC), which are  free operations in the resource theory of entanglement:
While the operational meaning is known, 
a mathematically precise form is lacking for the set. 
%, but still LOCCs lack a mathematical characterization. 
For the resource theory of quantum coherence with respect to mutually orthogonal or linearly independent bases, a successful characterization of some of the incoherent operations has been possible \cite{aberg'06,plenio'14,winter'16,plenio'17,das'20}. 
While the complete characterization of the set of free operations in a resource is often difficult, it is often useful to consider supersets of the required set. An example is the set of separable superoperators \cite{cha-garam}, which forms a superset of LOCC \cite{Badami}. In Theorem 1 below, we similarly identify a superset of the set of free operations in the resource theory of quantum coherence with an arbitrary linearly dependent qubit basis of cardinality \(= 3\).

%For resource theory of coherence with a basis formed by an arbitrary number of linearly dependent states, obtaining the form of incoherent operations is comparatively difficult. A bit simplified version of the problem is when one considers a fixed number of linearly dependent states. Here we take a simple example of resource theory of coherence with a basis consisting of three qubits. 
Towards this end, we consider \(\mathcal{B}_3=\{|\psi_1\ket,~|\psi_2\ket,|\psi_3\ket\}\) as an arbitrary linearly dependent qubit basis. Since any three (distinct) pure qubit states always lie on some circle (though not necessarily a great circle) on the surface of the Bloch sphere, without loss of generality, we can write \(|\psi_1\ket=\cos\frac{\theta}{2}|0\ket + e^{i\phi_1}\sin\frac{\theta}{2}|1\ket\), \(|\psi_2\ket=\cos\frac{\theta}{2}|0\ket + e^{i\phi_2}\sin\frac{\theta}{2}|1\ket\), and \(|\psi_3\ket=\cos\frac{\theta}{2}|0\ket + e^{i\phi_3}\sin\frac{\theta}{2}|1\ket\) , where $|0\ket$ and $|1\ket$ are orthogonal to each other. Note that no two of the $\phi_1, \phi_2,  \phi_3$ are equal, and without loss of generality, $\phi_1$ can be put equal to zero. Also, $\theta=0$ and $\theta=\pi$ are not allowed. 
Let $K$ be one of the Kraus operators constituting the corresponding set of incoherent operations $\mathcal{IO}$ for the basis $\mathcal{B}_3$. Since these incoherent operations map states from $\mathbb{B}_+(\mathbb{C}^2)$ to $\mathbb{B}_+(\mathbb{C}^2)$, $K$ can be expanded as $\sum_{m,n=0}^{1}K_{mn}|m\ket\bra n|$. The following theorem gives a necessary condition that each Kraus operator  constituting any incoherent operation belonging to $\mathcal{IO}$ for the basis $\mathcal{B}_3$ must satisfy.
\begin{theorem}
A necessary condition that a Kraus operator $K$ is a member of the set of Kraus operators of an incoherent operation, belonging to $\mathcal{IO}$, with respect to the qubit basis $\{|\psi_1\ket,~|\psi_2\ket,|\psi_3\ket\}$ of cardinality \(= 3\), is given by
\begin{equation}
\label{Eq4}
|K_{11}|^2 - |K_{22}|^2 + |K_{12}|^2 \frac{1}{\kappa}
%\frac{K^*_{21}K_{22}}{K^*_{11}K_{12}} 
- |K_{21}|^2 \kappa, %\frac{K^*_{11}K_{12}}{K^*_{21}K_{22}}=0.
\end{equation}
where \(\kappa = \frac{K^*_{11}K_{12}}{K^*_{21}K_{22}}\).
\end{theorem}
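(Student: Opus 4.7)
First I would identify the pure extreme points of $\mathcal{F}_{\mathcal{B}_3}$. The three Bloch vectors representing $|\psi_1\rangle,|\psi_2\rangle,|\psi_3\rangle$ all lie on the latitude circle at polar angle $\theta\in(0,\pi)$; since the $\phi_i$ are pairwise distinct, their convex hull is a nondegenerate triangle contained strictly in the interior of the Bloch ball except at its three vertices. Hence any pure free state must coincide with one of the $|\psi_i\rangle$, and the $\mathcal{IO}$ condition of Eq.~(\ref{Eq3}) then forces, for every $i\in\{1,2,3\}$, the existence of some $j(i)\in\{1,2,3\}$ and $c_i\in\mathbb{C}$ with $K|\psi_i\rangle=c_i|\psi_{j(i)}\rangle$. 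Writing $K$ in the $\{|0\rangle,|1\rangle\}$ basis with entries matched to the statement's 1-based indexing (so $K_{11}=\langle 0|K|0\rangle$, $K_{12}=\langle 0|K|1\rangle$, and so on) and setting $t=\tan(\theta/2)$, matching the $|0\rangle$-to-$|1\rangle$ component ratios of $K|\psi_i\rangle$ and $|\psi_{j(i)}\rangle$ gives $K_{21}+e^{i\phi_i}K_{22}t=e^{i\phi_{j(i)}}\,t\,(K_{11}+e^{i\phi_i}K_{12}t)$ for each $i$.

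Next I would take the absolute square of both sides to eliminate the unknown output phase $\phi_{j(i)}$. The result is a real equation of the form $|K_{21}|^2+t^2|K_{22}|^2+2t\,\mathrm{Re}(e^{i\phi_i}K_{21}^{*}K_{22})=t^2|K_{11}|^2+t^4|K_{12}|^2+2t^3\,\mathrm{Re}(e^{i\phi_i}K_{11}^{*}K_{12})$, which after collecting terms reads $\mathrm{Re}(e^{i\phi_i}\zeta)=C$, where $\zeta:=K_{21}^{*}K_{22}-t^2K_{11}^{*}K_{12}$ and $C$ is a constant independent of $i$. Since $\phi\mapsto\mathrm{Re}(e^{i\phi}\zeta)$ is a genuine sinusoid unless $\zeta=0$, the existence of three pairwise distinct values $\phi_1,\phi_2,\phi_3$ at which it takes the same value forces $\zeta=0$, i.e.\ $\kappa=1/t^2$. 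Substituting this back into any one of the squared equations causes the $\mathrm{Re}$-terms to cancel, and after dividing by $t^2$ I am left with $|K_{11}|^2-|K_{22}|^2+|K_{12}|^2/\kappa-|K_{21}|^2\kappa=0$, which is the content of Eq.~(\ref{Eq4}) (the ``$=0$'' being evidently implicit in the statement as printed).

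The main obstacle I anticipate is the treatment of degenerate configurations in which one or more of the matrix entries $K_{mn}$ vanish, so that the ratio $\kappa$ in the statement is ill-defined; in those cases the argument should bypass the quotient, reading $\zeta=0$ directly as $K_{21}^{*}K_{22}=t^2K_{11}^{*}K_{12}$ and substituting that identity into the squared equations. A secondary subtlety is in the very first step: the claim that the pure free states are exactly the three $|\psi_i\rangle$ requires $\theta\in(0,\pi)$ and pairwise distinct $\phi_i$, both built into the theorem's hypotheses, which together ensure the Bloch-circle triangle is nondegenerate and the extremality argument goes through.
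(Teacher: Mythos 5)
Your proof is correct, but it reaches Eq.~(\ref{Eq4}) by a genuinely different route from the paper's. The paper feeds the full two-parameter family of mixed free states $\rho = p_1|\psi_1\rangle\langle\psi_1| + p_2|\psi_2\rangle\langle\psi_2| + (1-p_1-p_2)|\psi_3\rangle\langle\psi_3|$ into $K$ and compares a single diagonal entry of $K\rho K^\dagger$ with that of the output free state --- exploiting the fact that every free state in this theory has the same diagonal entries $\cos^2\frac{\theta}{2}$ and $\sin^2\frac{\theta}{2}$ irrespective of the mixing weights --- to obtain an identity $A + Bp_1 + Cp_2 = 0$ valid for all admissible $(p_1,p_2)$, whence $A=B=C=0$. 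You instead work only with the extreme points: you first establish the (correct, and genuinely needed) geometric lemma that the only pure states in $\mathcal{F}_{\mathcal{B}_3}$ are the three vertices, deduce $K|\psi_i\rangle = c_i|\psi_{j(i)}\rangle$, and eliminate the unknown output label $j(i)$ by taking moduli. Both arguments funnel into the same algebraic core --- that $\mathrm{Re}(\Delta e^{i\phi_i})$ taking a common value at three distinct phases forces $\Delta = K_{11}^*K_{12}\tan^2\frac{\theta}{2} - K_{21}^*K_{22}=0$ (your $\zeta$ is $-\Delta$) --- followed by the same back-substitution. What your version buys: it uses only the action on pure free states, the modulus trick neatly sidesteps any case analysis over the assignment $i \mapsto j(i)$, and you treat the degenerate entries for which $\kappa$ is ill-defined more carefully than the printed theorem does (the invariant content being $\Delta=0$ together with $A=0$). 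What the paper's version buys: it requires no identification of the extreme points of the free set, and it makes transparent why a basis of cardinality three --- hence two independent mixing parameters --- is exactly enough to force $\Delta=0$.
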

\begin{proof} Let $\rho$ be a free state in the resource theory of quantum coherence with respect to the linearly dependent basis, $\{|\psi_1\ket,|\psi_2\ket,|\psi_3\ket\}$. Let $K$ be the Kraus operator such that
\begin{equation}
\label{Eq5}
\frac{K \rho K^\dagger}{\text{tr}\{K \rho K^\dagger\}}=\rho',
\end{equation}
 where $\rho'$ is also an incoherent state. The free states can be written as
 $\rho=p_1|\psi_1\ket\bra\psi_1| + p_2|\psi_2\ket\bra\psi_2| +(1-p_1-p_2)|\psi_3\ket\bra\psi_3|$ and $\rho'=p'_1|\psi_1\ket\bra\psi_1| + p'_2|\psi_2\ket\bra\psi_2| +(1-p'_1-p'_2)|\psi_3\ket\bra\psi_3|,$ for $0\leqslant p_1$, $p_2$, $p'_1$, $p'_2$, $p_1+p_2$, $p'_1+p'_2\leqslant 1$.  Comparing any of the diagonal entries on both sides of Eq. (\ref{Eq5}) gives 
 \begin{eqnarray}
 \label{Eq6}
 |K_{11}|^2-|K_{22}|^2+|K_{12}|^2\tan^2\frac{\theta}{2} - |K_{21}|^2\cot^2\frac{\theta}{2} + 2\cot\frac{\theta}{2}\text{Re}(\Delta e^{i\phi_3})\nonumber\\
 +2\cot\frac{\theta}{2}\text{Re}(\Delta (1-e^{i\phi_3}))p_1
 +2\cot\frac{\theta}{2}\text{Re}(\Delta (e^{i \phi_2} - e^{i\phi_3}))p_2=0,\nonumber  \\
 \end{eqnarray}
where $\Delta=K^*_{11}K_{12}\tan^2\frac{\theta}{2}-K^*_{21}K_{22}.$ 
Since Eq. (\ref{Eq6}) is of the form $A+Bp_1+Cp_2=0$, and this must hold for every $p_1$ and $p_2$, therefore the only solution is given by $A=B=C=0.$
One can show that $B=0$ and $C=0$ imply $\Delta=0$ for nonzero $\phi_2, \phi_3$ and $\theta\neq 0$. And $\Delta =0$ implies  $|K_{11}|^2 - |K_{22}|^2 + |K_{12}|^2\frac{K^*_{21}K_{22}}{K^*_{11}K_{12}} - |K_{21}|^2\frac{K^*_{11}K_{12}}{K^*_{21}K_{22}}=0.$
\end{proof}

\section{Measures of quantum coherence}
\label{sec4}
Along with identifying the set of free states and free operations, an important aspect of a resource theory is also to recognize potential measures of the resource.
%defined in any resource theory, the next important step is to look out for resource quantifiers. 
One of the crucial properties of a resource quantifier is that it never increases under the corresponding set of free operations. For example, an 
%entanglement of formation or any other 
entanglement measure is typically an LOCC monotone, i.e., non-increasing 
%of a state never increases 
under LOCC. 
%There also exist quite a number of coherence measures in resource theory of coherence with orthogonal or linearly independent basis \cite{plenio'14,winter'16,plenio'17,das'20}.  
%In this paper, w
We propose here a quantum coherence measure, $C_\mathcal{B}$, defined with respect to a general basis $\mathcal{B}$, which could be a linearly dependent one. 
Important properties which our coherence measure should satisfy are 
\begin{itemize}
\item[1.] $C_{\mathcal{B}}(\rho) \geq 0$, and the equality holds  iff $\rho \in \mathcal{F_{\mathcal{B}}}$,
\item[2.] $C_{\mathcal{B}}(\Phi(\rho))\leqslant C_{\mathcal{B}}(\rho)$, where $\Phi$ is an incoherent map.
\end{itemize}

We define the quantum coherence measure with respect to the basis $\mathcal{B}$ for a state $\rho$ as 
\begin{equation}
\label{Eq7}
C_{\mathcal{B}}(\rho)=\min_{\sigma \in \mathcal{F}_\mathcal{B}} \mathcal{D}(\rho,\sigma),
\end{equation} 
where $\mathcal{D}$ is a contractive distance measure between states. 
A distance measure \(\mathcal{D}\) on \(\mathbb{B}_+(\mathbb{C}^d)\) is said to be contractive if \(\mathcal{D}(\Phi(\varrho), \Phi(\varsigma)) \leqslant \mathcal{D}(\varrho, \varsigma)\) for all unit trace \(\varrho, \varsigma \in \mathbb{B}_+(\mathbb{C}^d)\) and for all CPTP \(\Phi\).
%Distance measure between two states which is contractive under incoherent CPTP operations are suitable coherence measures, as the satisfy both the properties, 1 and 2.   
The relative entropy 
%of coherence 
is an example of a contractive measure, although it does not satisfy symmetry and the triangle inequality \cite{atma-nirbhar-mane-single}. 
Note that \(\mathcal{F}_{\mathcal{B}}\) is a closed set, and this gives us the right of using a minimum instead of an infimum in the definition of the measure. 

\begin{theorem}
The quantum coherence measure \(C_\mathcal{B}\) with respect to a general basis \(\mathcal{B}\) satisfies the properties 1 and 2.
\end{theorem}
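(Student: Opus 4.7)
The plan is to prove the two properties separately, exploiting only the contractivity of $\mathcal{D}$, the fact that $\mathcal{D}$ is a genuine distance (so $\mathcal{D}(\varrho,\varsigma)\geqslant 0$ with equality iff $\varrho=\varsigma$), and the defining property of an incoherent map that $\Phi(\mathcal{F}_\mathcal{B})\subseteq \mathcal{F}_\mathcal{B}$. Throughout I will use that the minimum in Eq.~(\ref{Eq7}) is actually attained, which follows because $\mathcal{F}_\mathcal{B}$ is closed (as already noted in the excerpt) and bounded inside the compact set of density operators on $\mathbb{C}^d$, and $\mathcal{D}(\rho,\cdot)$ is continuous.

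For property 1, I would first note that $C_\mathcal{B}(\rho)\geqslant 0$ is immediate from the non-negativity of $\mathcal{D}$. For the ``iff'' clause, one direction is easy: if $\rho\in\mathcal{F}_\mathcal{B}$, choose $\sigma=\rho$ in Eq.~(\ref{Eq7}) to obtain $C_\mathcal{B}(\rho)\leqslant\mathcal{D}(\rho,\rho)=0$, which combined with non-negativity gives $C_\mathcal{B}(\rho)=0$. For the converse, suppose $C_\mathcal{B}(\rho)=0$; since the infimum is attained at some $\sigma^\star\in\mathcal{F}_\mathcal{B}$, I get $\mathcal{D}(\rho,\sigma^\star)=0$, whence $\rho=\sigma^\star\in\mathcal{F}_\mathcal{B}$.

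For property 2, let $\sigma^\star\in\mathcal{F}_\mathcal{B}$ be a minimizer so that $C_\mathcal{B}(\rho)=\mathcal{D}(\rho,\sigma^\star)$. Because $\Phi$ is incoherent, $\Phi(\sigma^\star)\in\mathcal{F}_\mathcal{B}$, so it is a feasible point in the minimization defining $C_\mathcal{B}(\Phi(\rho))$. Hence
\begin{equation}
C_\mathcal{B}(\Phi(\rho))\leqslant \mathcal{D}(\Phi(\rho),\Phi(\sigma^\star))\leqslant \mathcal{D}(\rho,\sigma^\star)=C_\mathcal{B}(\rho),
\end{equation}
where the second inequality uses contractivity of $\mathcal{D}$ under the CPTP map $\Phi$. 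This argument works verbatim for any $\Phi\in\mathcal{MIO}$, and thus a fortiori for any $\Phi\in\mathcal{IO}$.

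I do not expect any real obstacle; the only subtleties are bookkeeping ones. One is ensuring that the ``incoherent map'' in property 2 is interpreted as an element of $\mathcal{MIO}$ (i.e.\ a CPTP map with $\Phi(\mathcal{F}_\mathcal{B})\subseteq\mathcal{F}_\mathcal{B}$), which is precisely what the proof needs and which the excerpt already identifies as the weakest natural notion of freeness. A second is justifying attainment of the minimum rather than only an infimum, handled by the closedness remark already made just before the statement. Beyond these, the proof is a direct transcription of the standard distance-based monotonicity argument to the present setting of a possibly linearly dependent basis $\mathcal{B}$, and nothing in that argument uses orthogonality or linear independence of the elements of $\mathcal{B}$.
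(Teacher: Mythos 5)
Your proposal is correct and follows essentially the same route as the paper's own proof: nonnegativity and the ``only if'' direction of property 1 from the properties of $\mathcal{D}$ and closedness of $\mathcal{F}_\mathcal{B}$, and property 2 via the chain $C_\mathcal{B}(\rho)=\mathcal{D}(\rho,\sigma^\star)\geqslant\mathcal{D}(\Phi(\rho),\Phi(\sigma^\star))\geqslant C_\mathcal{B}(\Phi(\rho))$ using contractivity and the freeness of $\Phi(\sigma^\star)$. Your version is marginally more explicit about why $C_\mathcal{B}(\rho)=0$ forces $\rho\in\mathcal{F}_\mathcal{B}$ (attainment of the minimum plus faithfulness of $\mathcal{D}$), but this is the same argument the paper compresses into its remark about closedness.
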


\begin{proof}
The nonnegativity of the measure follows from the nonnegativity of the distance measure. If \(\rho \in \mathcal{F}_\mathcal{B}\), then the measure vanishes by definition. If \(C_\mathcal{B}(\rho)=0\), then \(\rho\) must be a free state, since the set of free states is a closed set in the space \(\mathbb{B}_+(\mathbb{C}^d)\). 

For proving property 2, suppose that the incoherent state that attains the minimum distance in the quantum coherence measure for the state \(\rho\) is \(\tilde{\sigma}\), so that \(C_\mathcal{B}(\rho) = \mathcal{D}(\rho,\tilde{\sigma})\). 
%But, due to the contractive property of \(\mathcal{D}\), we have, 
Therefore,
for all incoherent \(\Phi\), we have
\begin{eqnarray}
C_\mathcal{B}(\rho) & = & \mathcal{D}(\rho,\tilde{\sigma}) \nonumber\\
& \geqslant & \mathcal{D}(\Phi(\rho),\Phi(\tilde{\sigma})) \nonumber\\
& \geqslant & C_\mathcal{B}(\Phi(\rho)), \nonumber
\end{eqnarray}
where the first inequality is due to the contractive nature of \(\mathcal{D}\), while the second inequality is due to the fact that \(\Phi(\tilde{\sigma})\) is an incoherent state.
\end{proof}

For the case of a qubit, the trace distance, which is the trace norm of the difference between two states (density matrices), equals the Euclidean distance between them in the Bloch ball. It is also contractive under trace preserving operations  \cite{nielsen}. Therefore, a suitable and easily computable candidate for a quantum coherence measure could be given by
\begin{equation}
\label{Eq8}
\widetilde{C}_{\mathcal{B}}(\rho)=\min_{\sigma \in \mathcal{F}_{\mathcal{B}}} \text{tr}|\rho-\sigma|,
\end{equation} 
where tr$|\cdot|$ denotes the trace norm of its argument.
Note that the set of maximally coherent states 
%in case of qubits with respect to this measure 
may or may not be a singleton set depending on the states constituting the basis. For example, the only maximally coherent state with respect to  the basis $\{|0\ket, |1\ket, |+\ket_x, |-\ket_x, |+\ket_y\}$ is $|-\ket_y$, where $|0\ket$ and $|1\ket$ are eigenstates of the Pauli operator $\sigma_z$, $|\pm\ket_x$ are eigenstates of the Pauli operator $\sigma_x$, and $|\pm\ket_y$ are eigenstates of the Pauli operator $\sigma_y$. And if the basis is $\{|0\ket, |1\ket, |+\ket_x\}$, then all the (infinite number of) states falling on one-half of the great circle (part of the circle with negative x-values) including states $|+\ket_y$, $|-\ket_x$, $|-\ket_y$ are maximally coherent. This has been illustrated in Fig. \ref{fig2}. 
If the basis is chosen to be $\{|0\ket, |1\ket, |+\ket_x, |-\ket_x, |+\ket_y, |-\ket_y\}$, so that the quantum coherence is equivalent to magic, the number of maximally coherent states is nonunique but finite.
The variation in the number of maximally coherent states with the variation in the basis is easy to understand as a geometric fact using the Bloch ball representation. However, it may be  less expected as an algebraic fact, as when a linearly independent non-orthogonal basis is considered for a qubit, then there exists a unique maximally coherent state, and on the other hand, if the basis is orthogonal, then there exist infinitely many maximally coherent states.

\begin{figure}[h]
%\begin{center}
\includegraphics[width = 0.48\textwidth]{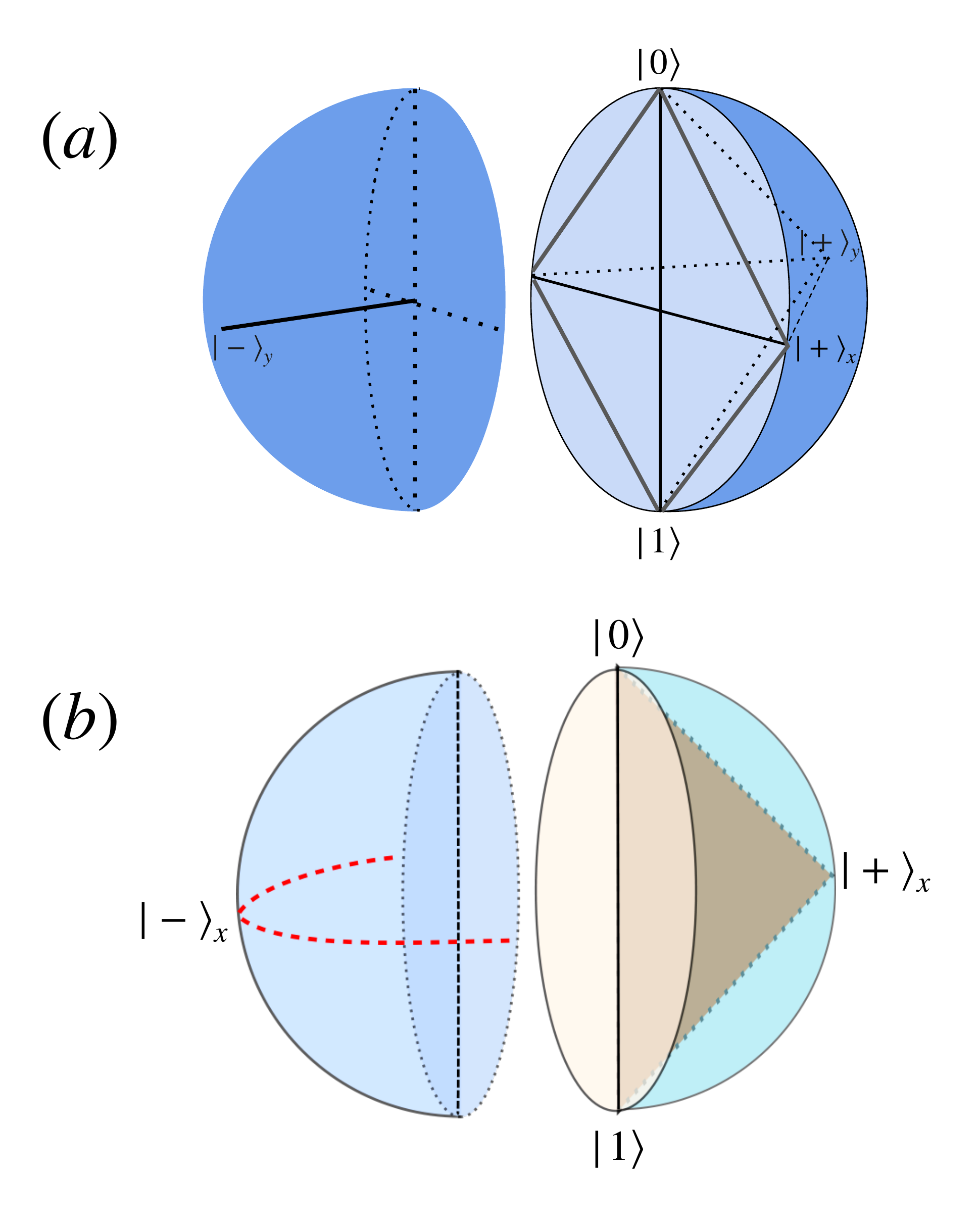} 
\caption{
%Schematic pictures of 
Representations on the Bloch sphere of maximally quantum coherent states 
%Bloch sphere showing the maximally coherent states 
with respect to different linearly dependent bases for a qubit. (a) The linearly dependent basis is $\{|0\rangle,|1\rangle,|+\rangle_{x},|-\rangle_{x},|+\rangle_{y}\}$, as marked on the right hemispherical section of the Bloch sphere, and the unique maximally coherent state is $|-\rangle_{y}$, marked on the left hemispherical section. (b) The linearly dependent basis is $\{|0\rangle, |1\rangle, |+\rangle_{x}\}$ and is shown on the right hemispherical section. There are an infinite number of maximally coherent states in this case. In the left hemispherical section, all the states lying on the red dashed curve are maximally coherent with respect to this basis.}
\label{fig2}
%\end{center}
\end{figure}

\section{Measurements producing Incoherent states }
\label{sec5}
 
%An orthogonal set of states forming a basis can be seen as the outcomes of a rank-1 projective measurement. Therefore, the free states in case of resource theory of coherence with respect to such a basis, are diagonal with respect to the outcomes of the measurement. There exist a generalization of resource theory of coherence based on higher rank projective measurements, in which the free states are block-diagonal, where each outcome of the measurement represents a block. These formulations are further generalized to more general measurements, i.e., POVMs. Here, our task is to compare the resource theories of coherence, defined with respect to a linearly dependent basis and defined based on generalized measurements. 
Any rank-1 (repeatable) projective measurement produces a diagonal state in an orthogonal basis if the result of the measurement is forgotten, i.e., if the classical flags corresponding to the pointer states in the measurement are dumped (traced out) \cite{sedin-chander-alo}. This creates an incoherent state of the resource theory of quantum coherence with respect to the orthogonal basis corresponding to the projection measurement.
We now show how the theory of generalized quantum measurements (POVMs) provides a method for generating 
%In this section, we provide a measurement which can produce an 
incoherent states for a resource theory with respect to a general basis, possibly a linearly dependent one, of an arbitrary dimensional Hilbert space.   
The proof goes as follows. We first show that for every set of states spanning a Hilbert space, there exists a measurement with the same states as its outcomes. This is the content of Theorem 3 below, and assumes the generalized von Neumann-L{\" u}ders postulate \cite{sedin-chander-alo} for the post-measurement states in a repeatable generalized measurement. The incoherent states in the resource theory of quantum coherence, defined with respect to the general basis (the spanning set), can then be obtained as outcomes of the measurement identified in Theorem 3, if we forget the outcome of the measurement. The theorem provides a method of creating particular mixtures of the elements of the chosen basis. Other mixtures of the elements of the same basis can be created by suitably blocking, completely or partially, some or all the outcomes, before the forgetting stage.
%seen as the outcomes of some measurements (POVMs).
%Further, in Young's double slit interferometric set-up the states that superpose are orthogonal.Here, we see that how this idea of a set of states, spanning a Hilbert space,  being a measurement outcome can be used to superpose linearly dependent or linerly independent set of states also, in an interferometric set-ups. 

We note here that there already exists a resource theory of quantum coherence based on an arbitrary POVM \cite{bruss'19}. As we will see later, the resource theory of quantum coherence considered here is different from the POVM-based resource theory.
%, which will be explained later.

%In the following theorem, it is shown that for any set of states, forming a basis, there exist a POVM with the same states as its outcome.   
\begin{theorem}
\label{th2}
For any set $\{|\psi_i\ket\}_{i=1}^n$, where $n\geqslant d$, of the Hilbert space $\mathbb{C}^d$, that spans $\mathbb{C}^d$, it is always possible to find a POVM that has $|\psi_i\ket$ as its outcomes, with possibly some outcomes that need to be ignored.
\end{theorem}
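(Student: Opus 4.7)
The plan is to construct a POVM whose first $n$ rank-one elements are proportional to the projectors $|\psi_i\ket\bra\psi_i|$, together with a single ``padding'' element that completes the resolution of identity and whose label one agrees to discard. The generalized von Neumann--L{\" u}ders post-measurement rule then does the work: a rank-one POVM element proportional to $|\psi_i\ket\bra\psi_i|$ always produces $|\psi_i\ket$ as the post-measurement state, irrespective of the input on which it clicks, which is precisely what the theorem demands.

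Concretely, I would define the positive operator $S=\sum_{i=1}^{n}|\psi_i\ket\bra\psi_i|$. The spanning hypothesis forces $S$ to be strictly positive: if $\bra\chi|S|\chi\ket=0$ for some $|\chi\ket\in\mathbb{C}^d$, then $\bra\chi|\psi_i\ket=0$ for every $i$, contradicting the assumption that the $|\psi_i\ket$ span $\mathbb{C}^d$. Hence the largest eigenvalue $\lambda$ of $S$ is strictly positive and $S\leqslant\lambda\,\mathbb{I}_d$. Setting $E_i=\tfrac{1}{\lambda}|\psi_i\ket\bra\psi_i|$ for $i=1,\ldots,n$ and $E_{n+1}=\mathbb{I}_d-\tfrac{1}{\lambda}S$ yields a bona fide POVM: each $E_i\geqslant 0$ by construction, $E_{n+1}\geqslant 0$ from the eigenvalue bound, and the operators sum to $\mathbb{I}_d$ by telescoping. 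Applying the generalized L{\" u}ders rule, the normalized post-measurement state conditioned on outcome $i\in\{1,\ldots,n\}$ is $|\psi_i\ket\bra\psi_i|$, because $\sqrt{E_i}$ is a scalar multiple of $|\psi_i\ket\bra\psi_i|$; the label $i=n+1$ is the outcome one ignores.

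There is no real technical obstacle: the argument is essentially bookkeeping around the spanning condition, which is what converts ``positive semidefinite'' into ``strictly positive'' for $S$ and makes $1/\lambda$ usable. The only freedom in the construction lies in the common rescaling factor---any positive weights $\{c_i\}$ with $\sum_i c_i|\psi_i\ket\bra\psi_i|\leqslant\mathbb{I}_d$ yield another valid POVM whose L{\" u}ders outcomes are the $|\psi_i\ket$---and this residual freedom is exactly what later allows one to tune the probability distribution over the basis elements when the measurement result is forgotten, matching the remark in the text that other mixtures of the $|\psi_i\ket$ can be produced by appropriately blocking or reweighting outcomes before the forgetting stage.
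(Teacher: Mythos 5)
Your proof is correct, but it takes a genuinely different route from the paper's. The paper splits into two cases according to whether the maximally mixed state $\frac{1}{d}\mathbb{I}_d$ lies in the convex hull of the projectors $|\psi_i\rangle\langle\psi_i|$: if it does, the POVM elements are $A_i=\sqrt{d\tilde{p}_i}\,|\psi_i\rangle\langle\psi_i|$ with $\sum_i\tilde{p}_i|\psi_i\rangle\langle\psi_i|=\frac{1}{d}\mathbb{I}_d$ and there are no ignored outcomes; if it does not, the set is first \emph{extended} by additional pure states $|\psi_{n+1}\rangle,\ldots,|\psi_m\rangle$ until the maximally mixed state enters the convex hull, and the extra rank-one outcomes are then discarded. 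You avoid both the case split and the extension by completing the resolution of identity with a single padding element $E_{n+1}=\mathbb{I}_d-\frac{1}{\lambda}S$ that need not be rank one or related to any pure state; since the theorem only asks that the ignored outcomes exist, not that they have any particular form, this is legitimate, and your observation that spanning forces $S>0$ strictly is exactly the right replacement for the paper's (asserted but not argued) claim that a suitable extension always exists. What each approach buys: the paper's version, in the favorable case, yields a POVM whose outcomes are \emph{exactly} the $|\psi_i\rangle$ with directly tunable occurrence probabilities $\tilde{p}_i$, which is convenient for the subsequent discussion of which mixtures arise at the forgetting stage; your version is uniform, shorter, and self-contained, at the cost of always carrying one discarded outcome (unless $S$ happens to be proportional to $\mathbb{I}_d$), with the mixture then controlled through the weights $c_i$ subject to $\sum_i c_i|\psi_i\rangle\langle\psi_i|\leqslant\mathbb{I}_d$ and the blocking procedure. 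Your verification of the L{\"u}ders step is also sound: since $\sqrt{E_i}\propto|\psi_i\rangle\langle\psi_i|$, the normalized post-measurement state is $|\psi_i\rangle\langle\psi_i|$ whenever the outcome occurs, and feeding in a full-rank state guarantees every outcome $i\leqslant n$ occurs with nonzero probability.
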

\begin{proof}
If $\sum_{i=1}^n p_i|\psi_i\ket\bra\psi_i|$, $p_i> 0$, $\sum_{i=1}^n p_i=1$, contains the maximally mixed state $\frac{1}{d}\mathbb{I}_d$, then the required POVM elements are $A_i=\sqrt{d\tilde{p}_i}|\psi_i\ket\bra\psi_i|$, where $\tilde{p}_i>0$, $\sum_{i=1}^n \tilde{p}_i=1$, and are such that $\sum_{i=1}^n \tilde{p}_i|\psi_i\ket\bra\psi_i| = \frac{1}{d}\mathbb{I}_d.$ Note that the POVM elements, $A_i$, satisfy  $\sum_{i=1}^n A_i^\dagger A_i=\mathbb{I}_d$.
Also, note that no probability $\tilde{p}_i$ can be zero as it will cause a corresponding POVM element to vanish, therefore leading to the non-occurrence of some $|\psi_i\ket$ as the POVM outcome. Note that we are only considering the cases for which \(p_i \ne 0 \forall i\).

If $\sum_{i=1}^n p_i|\psi_i\ket\bra\psi_i|$ \((p_i \ne 0 \forall i)\) does not contain  $\frac{1}{d}\mathbb{I}_d$, we can always extend the set $\{|\psi_i\ket\}_{i=1}^n$ to $\{|\psi_i\ket\}_{i=1}^m$, $m>n$ such that $\sum_{i=1}^m p_i|\psi_i\ket\bra\psi_i|$ contains $\frac{1}{d}\mathbb{I}_d$. Note that we still have \(p_i \ne 0 \forall i\), and now, \(i = 1, 2, \ldots, m\). We can then choose the POVM elements as $A_i=\sqrt{d\tilde{p}_i}|\psi_i\ket\bra\psi_i|$, where $\tilde{p}_i$ are such that $\sum_{i=1}^m \tilde{p}_i|\psi_i\ket\bra\psi_i| = \frac{1}{d}\mathbb{I}_d$, \(\sum_{i=1}^m \tilde{p}_i =1\) and \(\tilde{p}_i \geq 0\) for all \(i = 1, 2, \ldots, m\).  

In both cases, for the output states to be 
\(|\psi_i\rangle\), for \(i=1,2, \ldots n\), we need to feed the measurement with the completely depolarized state (maximally mixed state), \(\frac{1}{d}\mathbb{I}_d\), or any other state of full rank, and assume the von Neumann-L{\" u}ders measurement postulate, so that the outcomes are 
\begin{equation}
\frac{A_i \frac{1}{d}\mathbb{I}_d A_i^{\dagger}}{\mbox{tr}\left(A_i \frac{1}{d}\mathbb{I}_d A_i^{\dagger}\right)} 
%=
%\frac{A_i}{\mbox{tr}\left(A_i\right)} 
= |\psi_i\rangle \langle \psi_i|.
\end{equation}

In the latter case, the outcomes corresponding to the POVM elements $A_i$ for $i=n+1, n+2,\ldots,m$, need to be ignored.
\end{proof}
To illustrate the result, consider the set $\{|\phi_1\ket, |\phi_2\ket, |\phi_3\ket\}$ spanning $\mathbb{C}^2$, with $|\phi_1\ket=|0\ket$, $|\phi_2\ket=\frac{1}{\sqrt{2}}(|0\ket+|1\ket)$, $|\phi_3\ket=|1\ket$. As $\sum_i p_i|\phi_i\ket\bra\phi_i|$ do not contain $\frac{1}{2}\mathbb{I}_2$ for $p_i>0$, $\sum_i p_i=1$ for $i=1,2$, and 3, we extend the set by adding $|\phi_4\ket=\frac{1}{\sqrt{2}}(|0\ket-|1\ket)$ in the set, and now, $\sum_{i=1}^4 p_i|\phi_i\ket\bra\phi_i|$ contains the maximally mixed state, with \(p_i > 0 \forall i = 1,2,3,4\) and \(\sum_{i=1}^4p_i =1\). 
To create a mixture of \(|0\rangle\), \(\frac{1}{\sqrt{2}}(|0\rangle + |1\rangle)\), and \(|1\rangle\), which is an incoherent state in the resource theory of quantum coherence with respect to the linearly dependent basis formed these three vectors, we can now feed the state \(\frac{1}{2}\mathbb{I}_2\) to the POVM with the POVM elements 
%The POVM element corresponding to the outcome 
%$|\phi_i\ket$ is 
$\sqrt{2\tilde{p}_i}|\phi_i\ket\bra\phi_i|$, \(i=1,2,3,4\).  Note that  $\sum_{i=1}^4 \tilde{p}_i|\phi_i\ket\bra\phi_i|= \mathbb{I}_2$, with the values of the $\tilde{p}_{i}$'s being such that $\tilde{p}_i>0 \forall i = 1,2,3,4$, $\sum_i \tilde{p}_i=1$. The required state is created if we ignore the outcome corresponding to the POVM element with \(i=4\), and carry on to the forgetting stage. The other incoherent states of the same resource theory can be created in the same measurement, but we then have to add a partial or full blocking of some of the non-ignored outcomes, i.e., the outcomes \(i=1, 2,3\), just before the forgetting stage.

%To get the incoherent states 
%for $i=0,1,2,3$.  

\emph{Difference with POVM-based resource theories of quantum coherence:} For the resource theory of quantum coherence with respect to the set $\{|\psi_i\ket\}_{i=1}^n$, which spans $\mathbb{C}^d$, the  incoherent states may be given by the characterization as in Section \ref{sec2}, viz. they are all states that can be expressed as $\sum_i p_i|\psi_i\ket\bra\psi_i|$, where $p_i \geqslant 0$, $\sum_i p_i = 1$. The parallel and equivalent avenue of characterizing the same set of incoherent states 
%other way of defining these incoherent states 
is in terms of measurement operator $\{A_i\}_{i=1}^r$, where $r=n$ or $m$, as described above in this section. 
%according to the above example. 
The incoherent set of states for a set $\{|\psi_i\ket\}_{i=1}^n$ are then given by
\begin{equation}
\sum_{i=1}^n q_i \frac{ A_i \frac{1}{d}\mathbb{I}_d A^{\dagger}_i}{\text{tr}\left(A_i \frac{1}{d}\mathbb{I}_d A_i^{\dagger}\right)}, 
 \label{pawn}
\end{equation}
 %where $\sigma$ is any state acting on $\mathbb{C}^d.$ 
where \(\{q_i\}_{i=1}^n\) forms the probability distribution representing the process of partial blocking just before the forgetting stage, as described above in this section.

%Corresponding to the POVM $\{A_i\}^r_{i=1}$, where $r=n$ or $m$, Naimark theorem guarantees that there is a PV on $\mathbb{C}^d\otimes \mathbb{C}^{d'}$, where $d'$ is the smallest integer such that $dd' \geqslant r$, for which the oucomes in $\mathbb{C}^d$ are on $\{A_i\}_{i=1}^r$. If $dd'>r$, then there will also be $dd'-r$ outlets in the PV measurement that will never click. Let us visualize this PV as an $r$-slit Young-type apparatus for a beam of two particle quantum systems of $\mathbb{C}^d\otimes \mathbb{C}^{d'}$. We now block the slits corresponding to $i=n+1, n+2,\ldots,m$, if $r>n$. We then send a beam of particles of the quantum system $\mathbb{C}^d$ onto the $r$-slit Young-type apparatus. The output on the other side of the slits will be a superposition on the states $\{|\psi_i\ket\}_{i=1}^n$. Therefore, it is possible to superpose a linearly independent or linearly dependent set of states as well.

%Note that t
The above treatment of resource theory of quantum coherence, defined with respect to a set of states forming a basis (spanning set), is quite different from the resource theory of coherence based on generalized measurements (POVMs) considered in Ref. \cite{bruss'19}.
The conceptualizations of the two resource theories are complementary in nature.
The theory based on POVMs, first defines the quantum coherence measure of a state by resorting to a 
%certain 
Naimark extension of the POVM to a projective measurement (PV) in some higher-dimensional Hilbert space. And then, the free states of the theory are defined using the coherence measure.  Whereas, in the resource theory of quantum coherence defined here with respect to a set of states forming a basis, we first define the free states, then define quantum coherence measures based on distance of a state to the set of incoherent states.
%measures.  

An operational difference can be reported by providing  an example in which the free states in the POVM-based theory form a null set, i.e., there are no free states. Such a situation can never appear in the resource theory of quantum coherence considered here. 
%whose outcomes contain a certain set of states, are different from the free states in the resource theory of coherence defined with respect to the same set of states. 
Let us consider the set formed by the  states   $|\chi_0\ket=|0\ket,~|\chi_1\ket=\cos\frac{\theta}{2}|0\ket+\sin\frac{\theta}{2}|1\ket,~|\chi_2\ket=\cos\frac{\theta}{2}|0\ket-\sin\frac{\theta}{2}|1\ket.$ Note that this set of states falls on a great circle on the Bloch sphere, and for $\theta > \frac{\pi}{2}$, $\sum_i p_i|\chi_i\ket\bra\chi_i|$ contains the maximally mixed state, $\frac{1}{2}\mathbb{I}_2$, where $p_i>0$ and $\sum_i p_i = 1$.
Let $\{B_i\}_{i=0}^2$ be the POVM, acting on $\mathbb{C}^2$, where for every $i$, $B_i$ corresponds to an operator proportional to the projector of 
%the outcome 
$|\chi_i\ket$. The condition $\sum_i B_i^\dagger B_i= \mathbb{I}_2$ is met when 
\begin{eqnarray}
B_0&=&\sqrt{1-\cot^2\frac{\theta}{2}}|\chi_0\ket\bra\chi_0|, \nonumber \\
B_1 &=& \sqrt{\frac{1}{2}\mathrm{cosec}^2\frac{\theta}{2}}|\chi_1\ket\bra\chi_1|, \nonumber \\ 
B_2 &=& \sqrt{\frac{1}{2}\mathrm{cosec}^2\frac{\theta}{2}}|\chi_2\ket\bra\chi_2|. \nonumber
\end{eqnarray}
Let $\rho$ be a quantum state acting on $\mathbb{C}^2$. This state is incoherent in the POVM-based resource theory of quantum coherence, for the POVM $\{B_i\}_{i=0}^2$ iff \cite{bruss'19}
\begin{equation}
B_i^\dagger B_i \rho B_j^\dagger B_j \neq 0, ~ \forall i\neq j.
\end{equation}
It can be shown that the above conditions lead to the requirement $\bra\chi_0|\rho|\chi_1\ket= \bra\chi_1|\rho|\chi_2\ket= \bra\chi_2|\rho|\chi_0\ket=0,$ which is not possible for any state on $\mathbb{C}^2$. 

%Thus, there does not exist any free state based on POVM $\{B_i\}_{i=0}^2$, but in case of resource theory with respect to basis $\{|\chi_i\ket\}_{i=0}^2$, the free states are given by $\sum_i p_i |\chi_i\ket\bra \chi_i|$, where $p_i\geqslant0$, $\sum_i p_i=1$.

%\begin{theorem}
%For any rank-1 POVM $\{B_i\}_{i=1}^n$ with the completeness relation, $\sum_{i=1}^n B_i^\dagger B_i=\mathbb{I}_d$, and where $n>d$,  it is always possible to find a linearly dependent set of quantum states which form outcomes of the POVM. 
%\end{theorem}
%\begin{proof}
%Since, $B_i$ are rank 1, therefore these imply $B_i=c_i|\psi_i\ket\bra\psi|$, where $c_i$ are non-zero complex numbers and $|\psi_i\ket$ are some pure quantum states.
%Putting the form of $B_i$ in the completeness relation gives 
%\begin{equation}
%\label{Eq9}
%\sum_{i=1}^n \frac{|c_i|^2}{d}|\psi_i\ket\bra\psi_i|=\frac{1}{d}\mathbb{I}_d. 
%\end{equation}
%This imply, the set of states $|\psi_i\ket$ spans Hilbert space $\mathbb{C}^d$. Therefore for $n=d$, $\{|\psi_i\ket\}_{i=1}^n$ form an orthogonal set of states, and otherwise for $n>d$, a linearly dependent set.  
%\end{proof}

 \section{Complementary between coherence and path distinguishability}
 %in three-way interferometer}
 \label{sec6}

\begin{figure}[h]
\begin{center}
\includegraphics[width = 0.5\textwidth]{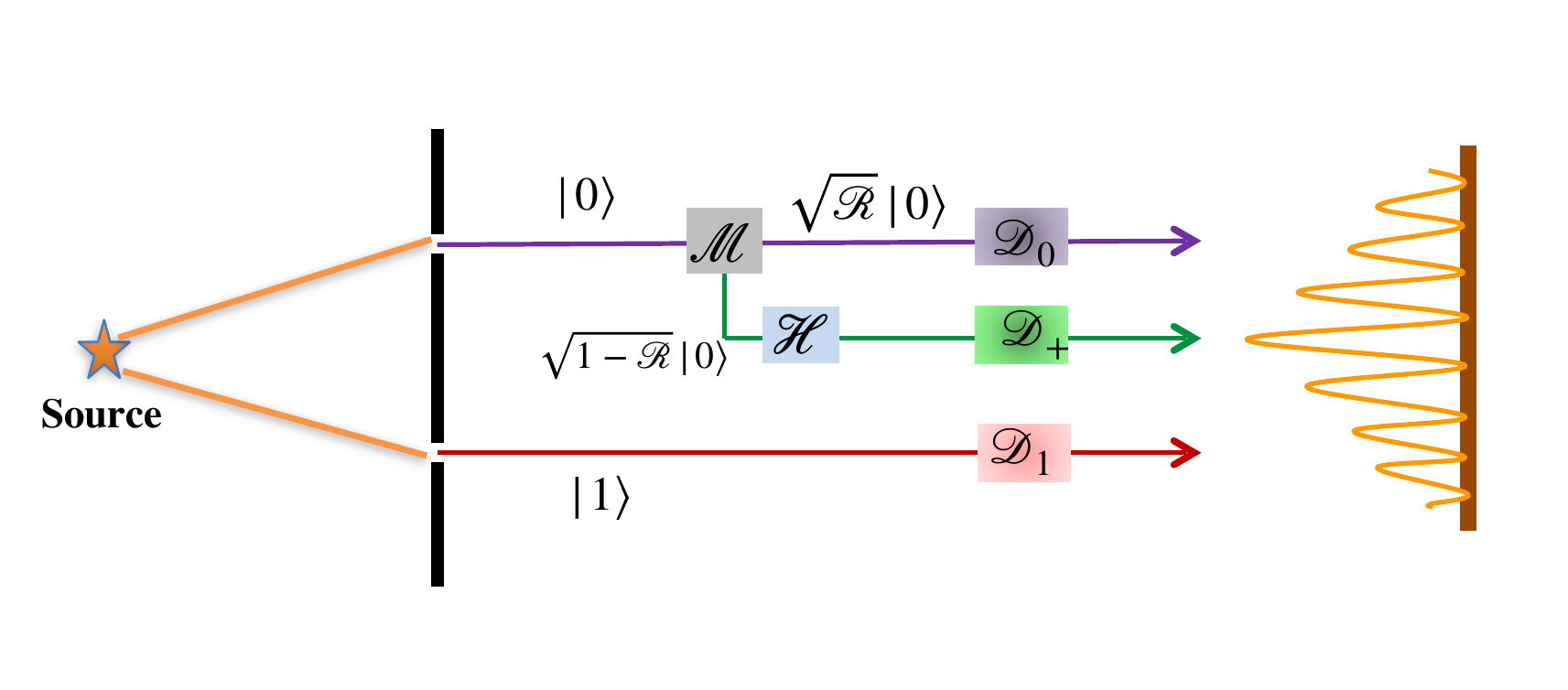} 
\caption{Wave-particle duality in a double-slit apparatus that superposes a linearly dependent set of states. This is a schematic diagram of a  double-slit 
%interferometric 
set-up to superpose the linearly dependent set of states, $\{|0\ket,~|+\ket,~|1\ket\}$. 
The state of the photon, passing through the uppermost slit is denoted by $|0\ket$, whereas, of that passing through lowermost slit is denoted by $|1\ket$, which is orthogonal to $|0\ket$.
The device $\mathcal{M}$ is  inserted in the path of $|0\rangle$ allowing a ``leaked'' part of the photon in the uppermost beam to pass through 
%upper slit 
%to pass through 
a Hadamard gate, marked in the figure as $\mathcal{H}$, and thus obtaining the state $|+\ket$. The non-leaking probability is denoted by $\mathcal{R}$. The three detectors $\mathcal{D}_0$, $\mathcal{D}_+$, and $\mathcal{D}_1$ are inserted in the paths for  $|0\ket,~|+\ket$, and $|1\ket$, respectively, before they impinge on a screen for observation of the interference pattern.
%At last, there is a screen to observe the interference.
}
\label{fig1}
\end{center}
\end{figure}

 In this section, we inquire about how the complementarity relation between coherence and path distinguishability is affected, when an interference between three linearly dependent states takes place in a double-slit apparatus.
 It is possible to 
 %similarly 
 consider a parallel apparatus for the same purpose in an interferometric set-up.
 %an interferometric set-up. 
 We conceptualize a set-up like the one schematically pictured in Fig. \ref{fig1}. 
A double slit is used to provide two different paths forming the quantum system, as considered in Young's double-slit experiment \cite{feynman}. The upper path, in Fig. \ref{fig1}, is represented by the state $|0\ket$, whereas the lower path is represented by the state $|1\ket$, with $|0\ket$ and $|1\ket$ being orthogonal, and spanning the Hilbert space $\mathbb{C}^2$. 
% Unlike in the standard Young's double slit experiment \cite{feynman}, here we don't allow these different path states to interfere at this stage.
 Let the input state be $\alpha|0\ket+\beta|1\ket$, with \(\alpha\), \(\beta\) being arbitrary complex numbers satisfying $|\alpha|^2+|\beta|^2=1$. 
In the next step, we insert a device $\mathcal{M}$ in the path represented by state $|0\rangle$ (i.e., the upper path), which allows the state $|0\rangle$ to pass uninterrupted with probability $\mathcal{R}$, and with probability $1-\mathcal{R}$, it makes the state $|0\rangle$ to get deflected and pass through a Hadamard gate $\mathcal{H}$, and pass on through a ``middle'' path, so that state vector representing the middle path is 
%$|0\ket \rightarrow 
\(|+\ket \equiv \frac{|0\ket + |1\ket}{\sqrt{2}}\). It is to be noted that the Hadamard gate transforms the states  \(|0\rangle\) and \(|1\rangle\) to \(|\pm\rangle \equiv \frac{1}{\sqrt{2}}(|0\rangle \pm |1\rangle)\).
We comment here that the task of the device $\mathcal{M}$ is akin to leaking the quantum entity  from one of the slits (precisely, the upper one). The leaked out part  has been represented in the schematic figure as a   third channel (that is neither the upper and nor the lower path). But, we must be careful here to not create an additional dimension, wherein the middle path will be represented by a state that is orthogonal to both \(|0\rangle\) and \(|1\rangle\), which we want to avoid. This set-up may be easier, at the fundamental level, to imagine within the spin dimension of a quantum spin-1/2   system, where the fear of an extra dimension will not appear.

Getting back to description of the schematic set-up in Fig. \ref{fig1}, at this stage, 
%Therefore 
the state of the quantum system, up to a normalization, has the form $\alpha\sqrt{\mathcal{R}}|0\ket + \alpha\sqrt{1-\mathcal{R}}|+\ket + \beta|1\ket$. In this way, it has been possible to superpose the three linearly dependent states $|0\rangle$, $|+\rangle$, and $|1\rangle$. In the next step, detectors $\mathcal{D}_0$, $\mathcal{D}_+$, and $\mathcal{D}_1$ are inserted in the paths corresponding to quantum states $|0\ket$, $|+\ket$, and $|1\ket$, respectively. All the detectors are prepared in the initial state $|d\rangle$. Let the quantum system that came through the slits, and passed via the device \(\mathcal{M}\) and the Hadamard gate,  be denoted by $\mathbb{Q}$, and the quantum system of the detectors be denoted by $\mathbb{D}$. The interaction of the two systems (the detector and the one that came through the slits)
%state and quantum state 
is given by a unitary operation. It is easy to show the existence of a unitary operator $U_{\alpha,~\beta,~\mathcal{R}}$ such that the joint state of the system and detector, after the interaction, becomes $|\Psi_\mathbb{QD}\rangle=(\alpha\sqrt{\mathcal{R}}|0\rangle|d_{0}\rangle + \alpha\sqrt{1-\mathcal{R}}|+\rangle|d_{+}\rangle + \beta|1\rangle|d_{1}\rangle)/N$, where $|d_0\ket$, $|d_+\ket$, and $|d_1\ket$ are the detector states after the interaction with respective states $|0\ket$, $|+\ket$, and  $|1\ket$, and $N$ is the normalization. 

The path distinguishability is a quantity which quantifies the probability of detecting the path that the quantum system \(\mathbb{Q}\) has passed through. A possible measure of this quantity is the probability of unambiguous discrimination of the detector states corresponding to different paths. In the particular case at our hand, there exist three paths corresponding to three states $|0\ket$, $|+\ket$, and $|1\ket$ of the quantum system \(\mathbb{Q}\). Since only linearly independent states can be probabilistically distinguished unambiguously, it is 
%therefore 
necessary to choose linearly independent detector states corresponding to different paths in order to unambiguously discriminate among the paths traced by the quantum system \(\mathbb{Q}\). Therefore, let the detector states belong to the Hilbert space $\mathbb{C}^3$. 
The  state of the detector after the interaction is given by $\rho_\mathbb{D}=\text{tr}_\mathbb{Q}|\Psi_\mathbb{QD}\ket\bra \Psi_\mathbb{QD}|$. As $\rho_\mathbb{D}$ involves cross terms in the basis of detector states formed by $|d_0\ket$, $|d_+\ket$, and $|d_1\ket$, we perform a probabilistic phase damping of $\rho_\mathbb{D}$, so that the resultant state, $\rho'_\mathbb{D}$, is diagonal in basis 
%of new detector states. 
\(\{|d_0\ket, |d_+\ket, |d_1\ket\}\).
Consider the POVM given by the POVM elements $A_0^{1/2}$, $A_+^{1/2}$, $A_1^{1/2}$, $A_?^{1/2}$, where 
\begin{eqnarray}
\label{povm}
A_0&=&c(\mathbb{I}_3 - |d_+\ket\bra d_+| - |d_+^{\perp}\ket\bra d_+^{\perp}|), \nonumber \\ \nonumber A_+&=&c(\mathbb{I}_3 - |d_1\ket\bra d_1| - |d_1^{\perp}\ket\bra d_1^{\perp}|), \\ \nonumber A_1&=&c(\mathbb{I}_3 - |d_0\ket\bra d_0| - |d_0^{\perp}\ket\bra d_0^{\perp}|), \\  A_?&=&\mathbb{I}_3-A_0-A_+-A_1,
\end{eqnarray}
 with 
$c$ being the maximum positive value for which $A_?$ remains positive semidefinite. Here,
\begin{eqnarray}
|d_+^{\perp}\ket&=&(|d_1\ket-|d_+\ket\bra d_+|d_1\ket)/N_+, \nonumber \\
|d_1^{\perp}\ket&=&(|d_0\ket-|d_1\ket\bra d_1|d_0\ket)/N_1, \nonumber \\
|d_0^{\perp}\ket&=&(|d_+\ket-|d_0\ket\bra d_0|d_+\ket)/N_0, 
\end{eqnarray}
with $N_+$, $N_1$, and $N_0$ being the normalizations. 

It can be seen from the expressions of $A_0$, $A_+$, and $A_1$ that a non-zero outcome is only possible if $A_0$ acts on $|d_0\ket$, $A_+$ acts on $|d_+\ket$, and $A_1$ acts on $|d_1\ket$. All the other actions on these states by these operators return zero. Therefore, these operators can be rewritten as
$$
A_0=c|d_{+1}^\perp\ket \bra d_{+1}^\perp|, ~~
A_+=c|d_{10}^\perp\ket \bra d_{10}^\perp|, ~~
A_1=c|d_{0+}^\perp\ket \bra d_{0+}^\perp|
$$
where $|d_{ij}^\perp\ket$ is orthogonal to $|d_i\ket$ and $|d_j\ket$ for $i,j=0,+,1$.  

%The operator $A_0$ acts non-trivially only on the state $|d_0\ket$, and its action on other detector states gives zero. Therefore, if $A_0$ clicks in the POVM operation on state $\rho_D$, it tells that the quanton took path $|0\ket$. In the similar fashion if $A_+$ and $A_1$ click, then the path taken by system are $|+\ket$ and $|1\ket$, respectively.         
The POVM operator $A_?$ does not unambiguously tell about the path followed by quanton as it acts non-trivially on all three detector states. Therefore, the outcome $A_?$ is discarded. The unnormalized state after
this phase damping is given by \begin{equation}
A_0^{1/2}\rho_{\mathbb{D}} A_0^{1/2}+A_+^{1/2}\rho_{\mathbb{D}} A_+^{1/2}+A_1^{1/2}\rho_{\mathbb{D}} A_1^{1/2},
\end{equation}
using the von Neumann-L{\" u}ders postulate, which on normalization takes the form $\rho'_D=p_0 |d_{+1}^\perp\ket \bra d_{+1}^\perp| + p_+|d_{10}^\perp\ket \bra d_{10}^\perp| + p_1|d_{0+}^\perp\ket \bra d_{0+}^\perp|,$ where $p_0$, $p_+$, and $p_1$ forms a probability distribution. 
The probability of unambiguous discrimination ($P$) of the detector states, $|d_{+1}^\perp\ket$, $|d_{10}^\perp\ket$, and $|d_{0+}^\perp\ket$, appearing respectively with probabilities $p_0$, $p_+$, and $p_1$, has the following upper bound
%ed by
\cite{uqsd}:
%\cite{bera'15}
\begin{align}
P \leqslant
1-\frac{2}{3}&(\sqrt{p_0p_+}|\bra d_{+1}^\perp|d_{10}^\perp \ket| \nonumber \\ 
&+\sqrt{p_+p_1}|\bra d_{10}^\perp|d_{0+}^\perp \ket|+\sqrt{p_1p_0}|\bra d_{0+}^\perp|d_{+1}^\perp \ket|).
\end{align}

But this unambiguous discrimination of the detector states does not yield the path distinguishability of the quantum system \(\mathbb{Q}\), as one of the POVM outcomes, $A_?$, in the phase damping process is discarded. Therefore, path distinguishability is obtained when the probability of  not obtaining this outcome in the phase damping process is multiplied by $P$. The probability of the outcome $A_?$ is given by tr$(\rho_D A_?)$. Therefore, the path distinguishability $\widetilde{D}$ is given by
\begin{equation}
\widetilde{D}=(1-\text{tr}(\rho_D A_?))P.
\end{equation}

Let the state of the system $\mathbb{Q}$ be denoted by $\rho_\mathbb{Q}$, which is obtained by tracing out the detector part from the joint system-detector state $|\Psi_\mathbb{QD}\ket$. The quantum  coherence, $\widetilde{C}$, of the state $\rho_\mathbb{Q}$, with respect to the linearly dependent basis, $\{|0\ket,~|1\ket, |+\ket\}$, is then obtained by using the trace norm measure of quantum coherence, defined in Eq. (\ref{Eq8}). Note that the maximum value of the quantum coherence in this case is equal to unity.
%for the chosen basis. 
Also, the maximum value that $P$ or $\widetilde{D}$ can take is 1. We have checked numerically, via a nonlinear optimization procedure, %\textcolor{red}{(what type of numerics??)} 
that  $\widetilde{C}$, quantifying the wave nature of the quantum system \(\mathbb{Q}\), and $\widetilde{D}$, quantifying the particle nature of the same, follow the complementarity relation,
\begin{equation}
\widetilde{C}+\widetilde{D}\leqslant 1.
\end{equation} 
Interestingly, the bound is independent of $\mathcal{R}$.
Note that there does not exist such a complementarity between the quantum coherence, $\widetilde{C}$, and the probability of unambiguous discrimination of the detector states, $P$, as \(\widetilde{C}+P\) is simply \(\leqslant 2\). 
%\textcolor{red}{(does the lhs reach 2??)}

Let us now consider the set-up used in \cite{das'20}. The superposition of the quantum system was considered there in a  linearly \emph{independent} basis.  Let the coherence in this case be denoted by $\widetilde{C}_{li}$, the probability of unambiguous discrimination of the detector states by $P_{li}$, and the path distinguishability by $\widetilde{D}_{li}$. It is to be noted that in \cite{das'20}, a similar phase damping of the detector state was employed as considered here. The probability of the discarded POVM outcome was not multiplied to the probability of unambiguous discrimination of the detector states, $P_{li}$, and instead $P_{li}$ itself was considered as path distinguishability. In that article, a complementarity between the quantities $\widetilde{C}_{li}$ and $P_{li}$ was presented, and it was $\widetilde{C}_{li}+P_{li}\leqslant 1$. But in our case, there does not exist such a complementarity between $\widetilde{C}$ and $P$, as discussed earlier. Of course,  there do exist the same complementarity as here between the quantum coherence, $\widetilde{C}_{li}$, and the path distinguishability, $\widetilde{D}_{li}$, viz.
\begin{equation}
\widetilde{C}_{li}+\widetilde{D}_{li}\leqslant 1.
\end{equation}

\section{Conclusion}
In this article, 
we introduced a resource theory of quantum coherence with respect to a general ``basis'', which could possibly consist of a linearly dependent set of states that span the space under consideration, which we have referred to as a linearly dependent basis. We started by providing a definition of the free states and the free operations in a resource theory of quantum coherence with respect to such a general, possibly linearly dependent, basis. 
%linearly dependent basis.
We tried to characterize the  incoherent (free) operations by presenting a necessary condition when the basis consists of three pure qubit state vectors. 
We found that the resource theory of magic can be seen as a resource theory of quantum coherence with respect to a linearly dependent basis.
We then discussed about the quantities which can 
%satisfy certain criterion to 
be a valid quantum coherence measure in the scenario under consideration, and derived a monotonicity relation under incoherent physical maps that a certain class of such measures will satisfy. We established a connection between incoherent states with respect to a general basis and generalized measurements (POVMs). 
%We  discussed that the resource theory considered in this paper is different with the resource theory based on POVMs \cite{bruss'19}. 
We then re-examined the wave-particle duality by providing a double-slit set-up in which superposition of linearly dependent states is possible. Specifically, we showed that quantum coherence and path distinguishability maintain a complementary relation also in this general case of an arbitrary basis. The complementarity bound obtained is found to be identical with the ones uncovered earlier in the literature  in the cases where superpositions of orthogonal states and  non-orthogonal linearly independent states were considered.

\label{sec7}

\end{document}